\newtheorem{theorem}{$\mathbf{Theorem}$}
\newtheorem{assumption}{Assumption}
\def\BibTeX{{\rm B\kern-.05em{\sc i\kern-.025em b}\kern-.08em
    T\kern-.1667em\lower.7ex\hbox{E}\kern-.125emX}}
\begin{document}

\title{A Federated Fine-Tuning Paradigm of Foundation Models in Heterogenous Wireless Networks}
\author{Jingyi Wang\IEEEauthorrefmark{1}, Zhongyuan Zhao\IEEEauthorrefmark{2}, Qingtian Wang\IEEEauthorrefmark{1}, Zexu Li\IEEEauthorrefmark{1}, Yue Wang\IEEEauthorrefmark{1}, Tony Q. S. Quek\IEEEauthorrefmark{3}\\
		\IEEEauthorrefmark{1}Wireless AI System Research Team, China Telecom Research Institute, Beijing, 102209, China\\
        \IEEEauthorrefmark{2}The State Key Laboratory of Networking and Switching Technology\\
		Beijing University of Posts and Telecommunications, Beijing, 100876, China\\
        \IEEEauthorrefmark{3}Singapore University of Technology and Design, 487372, Singapore\\
		E-mail: wangjy74@chinatelecom.cn, zyzhao@bupt.edu.cn, \{wangqt08, lizx28, yue.wang\}@chinatelecom.cn, \\ tonyquek@sutd.edu.sg\vspace*{-3mm}}


\maketitle
\begin{abstract}
Edge intelligence has emerged as a promising strategy to deliver low-latency and ubiquitous services for mobile devices. Recent advances in fine-tuning mechanisms of foundation models have enabled edge intelligence by integrating low-rank adaptation (LoRA) with federated learning. However, in wireless networks, the device heterogeneity and resource constraints on edge devices pose great threats to the performance of federated fine-tuning. To tackle these issues, we propose to optimize federated fine-tuning in heterogenous wireless networks via online learning. First, the framework of switching-based federated fine-tuning in wireless networks is provided. The edge devices switches to LoRA modules dynamically for federated fine-tuning with base station to jointly mitigate the impact of device heterogeneity and transmission unreliability. Second, a tractable upper bound on the inference risk gap is derived based on theoretical analysis. To improve the generalization capability, we formulate a non-convex mixed-integer programming problem with long-term constraints, and decouple it into model switching, transmit power control, and bandwidth allocation subproblems. An online optimization algorithm is developed to solve the problems with polynomial computational complexity. Finally, the simulation results on the SST-2 and QNLI data sets demonstrate the performance gains in test accuracy and energy efficiency.
\end{abstract}
\begin{IEEEkeywords}
Federated fine-tuning, online learning, device heterogeneity, foundation model, model switching.
\end{IEEEkeywords}

\IEEEpeerreviewmaketitle

\section{Introduction}
Benefiting from the potentiality of massive data and enhancement of computing capability on edge devices, network intelligentization is extending from the remote cloud server to the network edge \cite{edge}. Edge intelligence, which is envisioned as a key enabler to the sixth generation (6G) networks, has facilitated real-time processing for applications such as autonomous vehicles, embodied intelligence and augmented reality. 
Recent advances in fine-tuning on foundation model have provided an efficient mechanism to realize edge intelligence \cite{FT1}, in which pre-trained foundation models adapt to diverse downstream tasks and generate personalized solutions for edge devices.

As the computing resources are decentralized on the edge devices, federated fine-tuning (FedFT) is proposed to improve fine-tuning efficiency and protect user privacy \cite{FedFT}. The edge server periodically collects the fine-tuned foundation models from edge devices and executes global aggregation until convergence \cite{few-shot}. However, full-parameter FedFT may lead to excessive communication and computation overheads for resource-constrained edge devices. To tackle this issue, the low-rank adaptation (LoRA) technique \cite{lora} is proposed as a parameter-efficient fine-tuning strategy, which only updates the LoRA matrix components while freezing the foundation models to reduce the amount of trainable parameters.

Nonetheless, deploying FedFT for edge devices should address the challenge of device heterogeneity with non-independent
and identical distributed (non-IID) data characteristics, and diverse computation and communication resource constraints \cite{hetlora}. The recent works mainly focused on adjusting LoRA rank and enhancing aggregation strategy. In \cite{hetlora}, the local fine-tuning matrices are aggregated in the same dimension by zero-padding and truncated towards a personalized LoRA rank.
In \cite{FlexLora}, a dynamic LoRA rank is employed for device fine-tuning by adjusting the parameter allocation strategy for weight redesign.
In \cite{HierFlexLora}, devices are divided into near-IID groups for FedFT, and the intra-group aggregation frequency and fine-tuning depth are jointly optimized.

Although the recent approaches are effective in mitigating device heterogeneity, few works are engaged in the context of wireless networks, and the quantitative analysis of FedFT on device heterogeneity and resource constraints has not been studied in the literature. Meanwhile, the LoRA configurations are assumed to be uniform, which may degrade the performance under time-varying wireless circumstances. The dynamic FedFT schemes, including model switching-based FedFT, show great potentials for practical application. Motivated by these challenges, we propose a model switching-based federated fine-tuning paradigm in heterogenous wireless networks. Our main contributions are summarized as follows:
\begin{itemize}
    \item \textit{First}, the framework of federated fine-tuning with model switching is proposed. The foundation models are customized with diverse LoRA modules, and the edge devices implement switching-based federated fine-tuning to mitigate the impact of device heterogeneity.
    \item \textit{Second}, a tractable upper bound on the generalization performance is derived with regard to device heterogeneity, and an online optimization algorithm is designed on model switching and wireless resource management.
    \item \textit{Finally}, the simulation results prove the validity of theoretical analysis and demonstrate the performance gains.
\end{itemize}

\section{System Model of Federated Fine-tuning}
In Fig. \ref{fig: framework}, we focus on the scenario of federated fine-tuning based on LoRA in the wireless edge network. The gNodeB (gNB) is located at the center of its Voronoi cell with $K$ associated user equipments (UEs) $\mathcal{U} = \{U_1, \cdots, U_K\}$, and equipped with a computing server to enable data processing and interaction with devices. Each UE $U_k \in \mathcal{U}$ is equipped with a computing unit for model fine-tuning based on its local data set $\mathcal{D}_k$, which follows a non-IID probability density function. The gNB customizes $N$ LoRA modules, denoted by $\Delta \mathcal{W} = \{\Delta \mathbf{w}_{1}, \cdots, \Delta \mathbf{w}_{N}\}$, based on foundation model $\mathbf{w}_0$.

\begin{figure}[htb]\centering
\includegraphics[width = 0.47\textwidth]{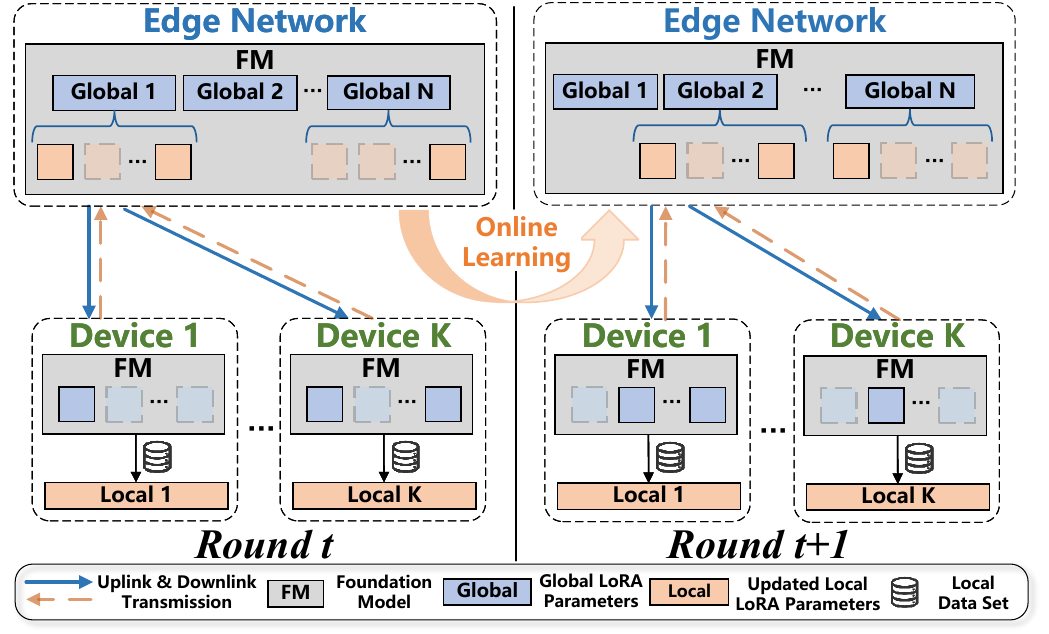}
\caption{The framework of federated fine-tuning scheme with model switching.}
\label{fig: framework}\vspace*{-3mm}
\end{figure}

In order to fine-tune foundation models in a dynamic manner, the model switching mechanism is deployed. In each round, the UEs individually subscribe LoRA modules from gNB to implement local fine-tuning by employing gradient-descent methods, and then transmit the feedback gradients to the gNB for federated learning.
In specific, the model parameters of the $n$-th LoRA module in the $t$-th round, denoted by $\Delta \mathbf{w}_{n}^t$, is updated at gNB based on the aggregation of feedback gradients from UEs in subscription, i.e.,
\begin{equation}\label{eqn:global_model}
\Delta \mathbf{w}_{n}^{t+1} = \Delta \mathbf{w}_{n}^t - \eta  \frac{\sum_{k=1}^K D_k^t \beta_{n,k}^t \gamma_{n,k}^t \nabla  F_{k}(\Delta \mathbf{w}_{n}^t; \mathcal{D}_{k}^{t})}{\sum_{k=1}^K D_k^t \beta_{n,k}^t \gamma_{n,k}^t},
\end{equation}
where $\nabla  F_{k}(\Delta \mathbf{w}_{n}^t; \mathcal{D}_{k}^{t}) = \frac{1}{D_{k}^{t}} \sum_{d=1}^{D_{k}^{t}} \nabla l(\Delta \mathbf{w}_{n}^t;\mathbf{x}^d,\mathbf{y}^d)$ denotes the gradient of empirical risk of $U_k$ for local model update on the $n$-th LoRA module in the $t$-th round, $D_k^{t}$ denotes the volume of fine-tuning data set randomly sampled from $\mathcal{D}_k$ in the $t$-th round, $l(\Delta \mathbf{w};\mathbf{x},\mathbf{y})$ is the risk function of $\Delta \mathbf{w}$ on a specific data sample $(\mathbf{x},\mathbf{y})$, $\beta_{n,k}^t$ is a binary indicator, in which $\beta_{n,k}^t =1$ when $U_k$ has subscribed the $n$-th LoRA module in the $t$-th round and $\beta_{n,k}^t = 0$ otherwise, and $\eta$ is the step size.

To characterize the uplink channel quality for transmitting $\nabla F_k(\mathbf{w}_{n}^t; \mathcal{D}_{k}^{t})$ to gNB, $\gamma_{n,k}^t$ is defined as
\begin{equation}\label{eqn:suc_recovery}
\gamma_{n,k}^{t}  = \mathds{1} \{\mathbf{SINR} =\frac{|\mathbf{h}_k^t|^2 P_k^t d_k^{-\alpha}}{\sum_{U_{l} \in \tilde{\Xi}_{o,k}} |\mathbf{h}_l^t|^2 P_l^t d_l^{-\alpha} + \sigma^2} \geq \theta  \},
\end{equation}
where $\mathbf{h}_k^t$ captures the small-scale flat Rayleigh channel fading, $P_k^t$ is the transmit power of $U_k$ in the $t$-th round, $d_k^{-\alpha}$ denotes the path loss with exponent $\alpha$, $\tilde{\Xi}_{o,k} $ is the location set of out-of-cell UEs that interfere with $U_k$, $\theta$ is the threshold of signal-to-interference-plus-noise ratio (SINR) at receiver, $\sigma^2= W \delta_{n,k}^t N_0$ is the power of noise, $W$ denotes the total bandwidth resource, $\delta_{n,k}^t$ denotes the bandwidth allocation coefficient of $U_k$ for transmission of $\nabla F_{k}(\Delta \mathbf{w}_{n}^t; \mathcal{D}_{k}^{t})$, $0 \leq \delta_{n,k}^t \leq 1$, and $N_0$ is the noise power density.

As $\nabla F_{k}(\Delta \mathbf{w}_{n}^t; \mathcal{D}_{k}^{t})$ is transmitted via orthogonal channels, the intra-cell interference can be eliminated. The gNB first broadcasts back the intact $\mathbf{w}_0$ for UEs updating LoRA modules in each round, and iteratively implements federated learning until $\Delta \mathbf{w}_{n}^t$ converge. In the inference phase, by summing the LoRA module parameters with the foundation model parameters \cite{lora}, the inference risk of $U_k$ on the test data set $\mathcal{D}_{k, \mathrm{te}}^{t}$ via model ensemble can be modelled as\footnote{For simplicity of analysis, we employ a uniform weight for model ensemble. The sophisticated weight design for harmonizing non-IID data distribution divergence can follow the work in \cite{noniid}.} 
\begin{equation}\label{eqn:inference}
\bar{F}_k(\mathbf{w}_t; \mathcal{D}_{k, \mathrm{te}}^{t}) = \frac{1}{N} \sum_{n=1}^N F_k(\Delta \mathbf{w}_{n}^{t+1}+\mathbf{w}_0; \mathcal{D}_{k, \mathrm{te}}^{t}).
\end{equation}


The energy consumption of $U_k$ in the $t$-th round of federated learning consists of both computation and communication costs, which can be formulated as
\begin{equation}\label{eqn:energy_UE}
C_{n,k}(t) =  \tau D_{k}^t G_{n,k}^{\mathbf{w}} f_{k}^2 \varrho_{k} M_{k} + P_k^t \frac{G_{n,k}^{\Delta \mathbf{w}}}{W \delta_{n,k}^t \log_2(1+ \mathbf{SINR} )} ,
\end{equation}
where $\tau$ is the number of local iterations, $f_k$ is the CPU/GPU frequency of $U_k$, $G_{n,k}^{\mathbf{w}}$ and $G_{n,k}^{\Delta \mathbf{w}}$ denote the data size of $\mathbf{w}_{0}$ and $\nabla F_{k}(\Delta \mathbf{w}_{n}^t; \mathcal{D}_{k}^{t})$ measured in bits, respectively, $M_k$ is the number of CPU/GPU cycles required to process one bit at $U_k$, and $\varrho_k$ is the energy consumption coefficient of the chip.

\section{Generalization Performance Analysis}
To analyze the generalization capability of the proposed federated fine-tuning paradigm, we study the expected risk gap between the inference risk in \eqref{eqn:inference} and the minimum risk over the downstream tasks in the $t$-th round, i.e.,
\begin{equation}\label{eqn:target}
\Phi_t = \sum_{k=1}^K \rho_k^t  \mathbb{E}_{\mathbf{h}} \{\bar{F}_k(\mathbf{w}_t; \mathcal{D}_{k, \mathrm{te}}^{t}) - F_k(\mathbf{w}_{k,t}^{*}; \mathcal{D}_{k, \mathrm{te}}^{t}) \},
\end{equation}
where $\rho_k^t = \frac{D_k^t}{\sum_{k=1}^K D_k^t}$ denotes the weight of data volume in the $t$-th round, and $\mathbf{w}_{k,t}^{*}$ denote the optimal model parameters for inference on $\mathcal{D}_{k, \mathrm{te}}^{t}$, i.e., $\mathbf{w}_{k,t}^{*} = \arg\min_{\mathbf{w}} F_k(\mathbf{w} ; \mathcal{D}_{k, \mathrm{te}}^{t})$. For brevity of notation, $F_k(\mathbf{w}; \mathcal{D}_{k, \mathrm{te}}^{t})$ is rewritten as $F_k(\mathbf{w})$. The following assumptions on $l(\mathbf{w};\mathbf{x},\mathbf{y})$ are made \cite{noniid, assum}.
\begin{assumption}\label{assumpt1}
$l(\mathbf{w};\mathbf{x},\mathbf{y})$ is $\epsilon$-Lipchitz continuous and $\xi$-strongly convex with respect to $\mathbf{w}$.
\end{assumption}
\begin{assumption}\label{assumpt2}
$l(\mathbf{w};\mathbf{x},\mathbf{y})$ is twice-continuously differentiable, i.e., $\xi \boldsymbol{I} \preceq \nabla^2 l(\mathbf{w};\mathbf{x},\mathbf{y}) \preceq \epsilon \boldsymbol{I}$.
\end{assumption}
\begin{assumption}\label{assumpt3}
There exists $\zeta_{1}$, $\zeta_{2} \geq 0$ that satisfies $\| \nabla l(\mathbf{w};\mathbf{x},\mathbf{y}) \|^2 \leq \zeta_{1} + \zeta_{2}\| \nabla F(\mathbf{w})\|^2$.
\end{assumption}

\begin{figure*}[ht]
\begin{equation}\label{eqn:theorem}\small \vspace*{-3mm}
\begin{aligned}
\Phi_t & \leq \underbrace{\frac{1}{N} \sum_{k=1}^K \rho_k^t \sum_{n=1}^N (1- A_n^{t})[F_k( \Delta \mathbf{w}_{n}^{t})-F_k(\mathbf{w}_{k,t}^{*})]}_{\mathrm{Convergence~of~local~fine-tuning}} + \underbrace{\sum_{q=0}^{t} \prod_{p=q+1}^{t} \sum_{n=1}^{N} ( 1- A_n^p)B_n^{q}}_{\mathrm{Long-term~impact~of~transmission~unreliability}}  \\
& + \underbrace{\frac{\epsilon}{N} \sum_{k=1}^K \rho_k^t \sum_{n=1}^N A_n^{t} \sqrt{\frac{4 \zeta_1}{\xi^2 D_{k, \mathrm{te}}^{t}}} }_{\mathrm{Data~heterogenity~of~UE-specific~tasks}} +\underbrace{\frac{\epsilon+2}{4}  \sum_{k=1}^K \rho_k^t \|\mathbf{w}_0\|^2}_{\mathrm{Intrinsic~impact~of~foundation~model}},
\end{aligned}\vspace*{-5mm}
\end{equation}
\end{figure*}

Theorem \ref{theorem} gives an upper bound of $\Phi_t$ as follows.
\begin{theorem}\label{theorem}
Given the step size as $\eta = \frac{1}{\epsilon}$, there exists an upper bound on the expected risk gap $\Phi_t$ as \eqref{eqn:theorem}, where $\lambda_{n,k} = \mathbb{E}_{\mathbf{h}} \{ \gamma_{n,k}^t \}$ denotes the successful transmission probability given in \eqref{eqn:succ_prob}, and the convergence rate is derived as $A_n^t =\frac{2\xi}{\epsilon}[1 - 8\zeta_2 \sum_{k=1}^K \rho_k^t (1-\lambda_{n,k})(1- \beta_{n,k}^t)-4\zeta_2 K - \frac{\epsilon}{2} ]$, $B_n^t =\frac{2 \zeta_1}{\epsilon}[K+2 \sum_{k=1}^K \rho_k^t (1-\lambda_{n,k})(1- \beta_{n,k}^t)]$. 
\end{theorem}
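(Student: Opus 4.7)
The plan is to decompose $\Phi_t$ into four error sources that match one-to-one the four terms in \eqref{eqn:theorem}: the per-round suboptimality of the aggregated LoRA iterate, the noise accumulated by transmission failures over past rounds, the statistical error introduced by the finite test sample $\mathcal{D}_{k,\mathrm{te}}^t$, and the displacement induced by fine-tuning around the foundation model $\mathbf{w}_0$. The starting point is to expand the ensemble inference risk $\bar F_k(\mathbf{w}_t)$ from \eqref{eqn:inference} and, for every LoRA module $n$, to add and subtract $F_k(\Delta\mathbf{w}_n^{t+1})$. This splits each summand into (i) $F_k(\Delta\mathbf{w}_n^{t+1}+\mathbf{w}_0)-F_k(\Delta\mathbf{w}_n^{t+1})$, which is handled in isolation by a Taylor argument, and (ii) $F_k(\Delta\mathbf{w}_n^{t+1})-F_k(\mathbf{w}_{k,t}^{*})$, to which federated-learning convergence machinery will be applied.

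For block (i) I would apply a second-order Taylor expansion around $\Delta\mathbf{w}_n^{t+1}$, using the Hessian upper bound $\nabla^{2} l\preceq\epsilon\boldsymbol{I}$ from Assumption~\ref{assumpt2}, and control the cross term $\nabla F_k(\Delta\mathbf{w}_n^{t+1})^{\top}\mathbf{w}_0$ by Young's inequality combined with the Lipschitz bound of Assumption~\ref{assumpt1}. After re-bundling constants this yields the quadratic-in-$\mathbf{w}_0$ residual $\tfrac{\epsilon+2}{4}\sum_k\rho_k^t\|\mathbf{w}_0\|^2$. For the data-heterogeneity contribution, I would bound the finite-sample gap $|F_k(\mathbf{w};\mathcal{D}_{k,\mathrm{te}}^t)-F_k(\mathbf{w})|$ by converting the function-value gap to a parameter distance via strong convexity ($\xi$), controlling the empirical-risk variance through $\zeta_1$ in Assumption~\ref{assumpt3}, and invoking the standard $O(1/\sqrt{D})$ Monte-Carlo rate; the $\epsilon$ prefactor then emerges from Lipschitz smoothness when passing back from parameter distance to function value.

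The central step is a one-round contraction for the aggregated iterate \eqref{eqn:global_model}. Taking expectation over the Rayleigh fading $\mathbf{h}$ replaces the random indicator $\gamma_{n,k}^t$ by the success probability $\lambda_{n,k}$ from \eqref{eqn:suc_recovery}. With $\eta=1/\epsilon$ and Assumptions~\ref{assumpt1}--\ref{assumpt3}, the standard descent lemma for smooth strongly convex functions gives a per-round recursion of the form $\mathbb{E}_{\mathbf{h}}[F_k(\Delta\mathbf{w}_n^{t+1})-F_k(\mathbf{w}_{k,t}^{*})]\leq(1-A_n^t)[F_k(\Delta\mathbf{w}_n^t)-F_k(\mathbf{w}_{k,t}^{*})]+B_n^t$, in which the contraction rate $A_n^t$ is penalized by the miss-subscription and transmission-failure factor $(1-\beta_{n,k}^t)(1-\lambda_{n,k})$ through the $\zeta_2$ piece of Assumption~\ref{assumpt3}, while the residual $B_n^t$ inherits $\zeta_1$ from the same assumption. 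Unrolling this recursion from round $q=0$ up to $t$ produces the telescoping product $\prod_{p=q+1}^{t}\sum_n(1-A_n^p)$ weighted by $B_n^q$, which is exactly the long-term transmission-unreliability term. Aggregating over $k$ with weights $\rho_k^t$ and collecting all four contributions completes the upper bound.

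The main obstacle I anticipate is the contraction step: the aggregation in \eqref{eqn:global_model} involves the ratio $\frac{D_k^t\beta_{n,k}^t\gamma_{n,k}^t}{\sum_k D_k^t\beta_{n,k}^t\gamma_{n,k}^t}$, whose numerator and denominator are both fading-dependent, so $\mathbb{E}_{\mathbf{h}}[\cdot]$ does not factor cleanly. Producing the advertised form of $A_n^t$ and $B_n^t$ requires either approximating the denominator by its expectation (Jensen) or decomposing the expectation conditionally on the set of successfully transmitting UEs, and then tracking the gradient-variance bookkeeping carefully so that the $\zeta_1$, $\zeta_2$ and $K$ coefficients land in the exact positions stated in the theorem. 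A secondary subtlety will be balancing the Young-inequality split against the Lipschitz-gradient bound in block (i) so that the prefactor of $\|\mathbf{w}_0\|^2$ resolves to $\tfrac{\epsilon+2}{4}$ rather than a looser constant.
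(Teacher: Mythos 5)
Your overall architecture --- splitting off the $\mathbf{w}_0$ contribution by a second-order Taylor expansion, introducing a gradient-aggregation error $\mathbf{e}_n^t$ whose second moment carries the $(1-\lambda_{n,k})(1-\beta_{n,k}^t)$ factors via Assumption~\ref{assumpt3}, and unrolling a one-round recursion to produce the $\prod_{p=q+1}^{t}(1-A_n^p)B_n^q$ term --- matches the paper's proof, and you correctly flag the expectation-of-a-ratio issue in \eqref{eqn:global_model} that the paper itself resolves only by citation. The genuine gap is in your contraction step. You claim a per-UE recursion $\mathbb{E}[F_k(\Delta\mathbf{w}_n^{t+1})-F_k(\mathbf{w}_{k,t}^{*})]\leq(1-A_n^t)[F_k(\Delta\mathbf{w}_n^t)-F_k(\mathbf{w}_{k,t}^{*})]+B_n^t$ directly toward each UE's own optimum $\mathbf{w}_{k,t}^{*}$. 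That recursion is not obtainable: the update \eqref{eqn:global_model} moves along (a noisy version of) the aggregate gradient $\nabla F=\sum_k\rho_k^t\nabla F_k$, which under non-IID data is not a descent direction for an individual $F_k$, and strong convexity (via the Polyak--{\L}ojasiewicz inequality) gives $\|\nabla F\|^2\geq 2\xi(F-F(\mathbf{w}^{*}))$ only for the global objective and its minimizer $\mathbf{w}^{*}=\arg\min_{\mathbf{w}}F(\mathbf{w})$. The paper therefore inserts $\mathbf{w}^{*}$ as an intermediate point, writing $F_k(\cdot)-F_k(\mathbf{w}_{k,t}^{*})=[F_k(\cdot)-F_k(\mathbf{w}^{*})]+[F_k(\mathbf{w}^{*})-F_k(\mathbf{w}_{k,t}^{*})]$, runs the contraction on the $\rho_k^t$-weighted sum of the first brackets (i.e., on $F-F(\mathbf{w}^{*})$), and only afterwards relaxes to $F_k(\mathbf{w}_{k,t}^{*})$ in the first term of \eqref{eqn:theorem}.

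This also changes the origin of the data-heterogeneity term. You derive $\epsilon\sqrt{4\zeta_1/(\xi^2 D_{k,\mathrm{te}}^{t})}$ as a finite-sample Monte-Carlo error of the empirical test risk. In the paper it is the second bracket above, $\mathbb{E}[F_k(\mathbf{w}^{*})-F_k(\mathbf{w}_{k,t}^{*})]\leq\epsilon\sqrt{\mathbb{E}\|\mathbf{w}^{*}-\mathbf{w}_{k,t}^{*}\|^2}$, bounded by Lipschitz continuity together with Lemma~1 of the cited Rakhlin--Shamir--Sridharan paper; it measures the divergence between the global and UE-specific optima, not sampling noise. Without the intermediate $\mathbf{w}^{*}$ your decomposition has no slot where this term naturally appears, so the two issues are really one: restore the global minimizer as a pivot and both the contraction and the heterogeneity term fall into place.
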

\begin{proof}
Please refer to Appendix A.
\end{proof}
Theorem \ref{theorem} illustrates that there exists a tractable upper bound on the generalization risk gap between the proposed federated fine-tuning scheme and the optimal scheme over downstream tasks, which is correlated with the impact of model switching in terms of $\beta_{n,k}^t$, transmission unreliability in terms of $\lambda_{n,k}$, data heterogeneity of UE-specific tasks, and the impact of foundation models. Moreover, the convergence rate of generalization risk gap is ensured, which depends on channel quality and model switching algorithms.


\section{Online Optimization on Model Switching, Power Control and Bandwidth Allocation}
In order to improve the generalization capability performance, and reduce the energy consumption of the federated fine-tuning system, the objective function is formulated by optimizing on model switching $\bm{\beta}^t=\{\bm{\beta}_1^t, \cdots, \bm{\beta}_K^t\}$, $\bm{\beta}_k^t=\{\beta_{1,k}^t, \cdots, \beta_{N,k}^t\}$, transmit power control $\mathbf{P}^t = \{ P_1^t, \cdots, P_K^t\}$, and resource allocation $\bm{\delta}^t=\{\bm{\delta}_{1}^t, \cdots, \bm{\delta}_{K}^t\}$, $\bm{\delta}_k^t=\{\delta_{1,k}^t, \cdots, \delta_{N,k}^t\}$ as follows 
\begin{subequations}\label{eqn:opt}
\begin{align}
 \min_{\mathbf{P}^t, \bm{\beta}^t, \bm{\delta}^t} \,\,&  \Phi_t + \mu \sum_t \sum_{n=1}^N \sum_{k=1}^K \beta_{n,k}^{t} C_{n,k}(t)  \\
s.t. & \quad  \beta_{n,k}^t \in \{0,1 \},\forall n, \forall t, \label{eq:sub2} \\
&  \quad  \sum_{n=1}^N \beta_{n,k}^{t} \leq s^t, \forall k, \forall t, \label{eq:sub3} \\
 & \quad \frac{1}{T} \sum_{t} \beta_{n,k}^{t} \geq v_n, \forall n, \label{eq:sub4} \\
&\quad   0 < P_k^t \leq P_{\mathrm{max}},\forall k, \forall t, \label{eq:sub1}\\
& \quad  \sum_{n=1}^N \sum_{k=1}^K \delta_{n,k}^t \leq 1, \delta_{n,k}^t \in [0,1], \label{eq:sub7}
\end{align}
\end{subequations}
where $\mu$ is the energy efficiency coefficient to balance generalization capability and resource overhead, $s^t$ denotes the maximum number of LoRA modules in subscription, and $v_n$ is the participation rate constraint. Since the optimization problem is a non-convex, mixed-integer programming problem with long-term constraints, we decouple \eqref{eqn:opt} into three subproblems, and develop an online algorithm to solve them iteratively.


\subsection{Model Switching from An Energy-Efficient Perspective}
Recalling the impact of $\beta_{n,k}^t$ in \eqref{eqn:theorem}, the dynamic model switching method can benefit UEs with enhanced generalization capability and energy efficiency. Therefore, in order to minimize the long-term cost in \eqref{eqn:opt}, the subproblem of model switching for an individual UE $U_k$ is organized as
\begin{subequations}\label{eqn:opt_2}
\begin{align}
& \min_{\bm{\beta}_k^t} \,\,  \sum_{q=0}^{t} \sum_{n=1}^{N} \prod_{p=q+1}^{t} ( 1- A_{n,k}^p) B_{n,k}^q + \mu \sum_{q=0}^{t} \sum_{n=1}^{N} \beta_{n,k}^{t} C_{n,k}(t) \label{eq:sub9a}\\
& s.t. \,\eqref{eq:sub2}, \,\eqref{eq:sub3} \,\, \mathrm{and} \,\, \eqref{eq:sub4},
\end{align}
\end{subequations}
where $A_{n,k}^t =\frac{2\xi}{\epsilon}[1 - 8\zeta_2 (1-\lambda_{n,k})(1- \beta_{n,k}^t)-4\zeta_2 K - \frac{\epsilon}{2} ]$, $B_{n,k}^t =\frac{2 \zeta_1}{\epsilon}[K+2 (1-\lambda_{n,k})(1- \beta_{n,k}^t)]$.

The integer linear programming problem in \eqref{eqn:opt_2} can be transformed into a linear programming problem by relaxing $\beta_{n,k}^{t}$ to the real domain, i.e., $\beta_{n,k}^{t} \in [0,1], \forall n, \forall t $.
The cost function to be minimized in the $t$-th round is expressed as
\begin{equation}\label{eqn:X}
Q_k^t = \sum_{n=1}^{N} [( 1- A_{n,k}^t) B_{n,k}^{t-1} + B_{n,k}^{t}]+ \mu \sum_{n=1}^{N} \beta_{n,k}^{t} C_{n,k}(t).
\end{equation}

Denote by $y^t(\bm{\beta}_k^{t}) = \sum_{n=1}^N \beta_{n,k}^{t} - s^t$, and $z_n^t = v_n - \beta_{n,k}^{t}$, which satisfies $\mathbf{z}^t(\bm{\beta}_k^{t}) = [z_1^t, \cdots, z_N^t]\preceq 0 $. Thus, the problem in \eqref{eqn:opt_2} can be transformed into its min-max version as
\begin{subequations}\label{eqn:opt_3}
\begin{align}
\min_{\bm{\beta}_k^t} \max_{\bm{\lambda}^t} \,\, & \sum_t \mathcal{F}^t (\bm{\beta}_k^{t}, \bm{\lambda}^t)= \sum_t Q_k^t(\bm{\beta}_k^{t}) + \sum_t \bm{\lambda}^t \mathbf{z}^t(\bm{\beta}_k^{t}) \\
s.t. \quad & y^t(\bm{\beta}_k^{t}) \leq 0, \beta_{n,k}^{t} \in [0,1], \forall n, \forall t,
\end{align}
\end{subequations}
where $\bm{\lambda}^t$ is the Lagrange multiplier, $Q_k^t(\bm{\beta}_k^{t})= \frac{16 \xi \zeta_2 }{\epsilon} \sum_{n=1}^{N} \{ B_{n,k}^{t-1} (1-\lambda_{n,k})(1- \beta_{n,k}^t) + \frac{4 \zeta_1}{\epsilon} (1-\lambda_{n,k})(1- \beta_{n,k}^t)\}+ \mu \sum_{n=1}^{N} \beta_{n,k}^{t} C_{n,k}(t)$, in which $ B_{n,k}^{t-1}$ is obtained in the $(t-1)$-th round.
To solve \eqref{eqn:opt_3}, we can alternately update $\bm{\lambda}^{t+1}$ and $\bm{\beta}_k^{t+1}$ on the fly based on the knowledge before the $(t+1)$-th round. The dual variable $\bm{\lambda}^{t+1}$ is updated via standard dual ascent step as
\begin{equation}\label{eqn:opt_4}
\bm{\lambda}^{t+1} = [\bm{\lambda}^{t} + \varsigma \nabla_{\bm{\lambda}} \mathcal{F}^t (\bm{\hat{\beta}}_k^{t}, \bm{\lambda}^t) ]^{+},
\end{equation}
where $\nabla_{\bm{\lambda}} \mathcal{F}^t (\bm{\hat{\beta}}_k^{t}, \bm{\lambda}^t)$ is the gradient of $\mathcal{F}^t (\bm{\hat{\beta}}_k^{t}, \bm{\lambda})$ given $\bm{\lambda}=\bm{\lambda}^t$, $\varsigma$ is step size, and $\bm{\hat{\beta}}_k^{t}$ is the solution obtained at $t$. $\bm{\beta}_k^{t+1}$ is updated via a decent step by approximating $\mathcal{F}^t (\bm{\beta}, \bm{\lambda}^{t+1})$ as
\begin{equation}\label{eqn:opt_4}
\min_{\bm{\beta}} \nabla Q_k^t(\bm{\hat{\beta}}_k^{t})(\bm{\beta} - \bm{\hat{\beta}}_k^{t}) + \bm{\lambda}^{t+1} \mathbf{z}^t(\bm{\beta}), s.t.~ y^t(\bm{\beta}) \leq 0.
\end{equation}

The problem in \eqref{eqn:opt_4} can be solved by employing the interior point method to approach $\kappa$-accurate optimal solution with polynomial time complexity $\mathcal{O}(N^2 \log(1/\kappa))$. The fractional solution $\bm{\hat{\beta}}_k^{t}$ from \eqref{eqn:opt_4} can be further converted into integers through randomized rounding algorithms.\vspace*{-2mm}

\subsection{Wireless Resource Management for Reliable Transmission}

Due to the sophisticated characteristics of SINR per UE, we first relax $\hat{C}_{n,k}(t)= \tau D_{k}^t G_{n,k}^{\mathbf{w}} f_{k}^2 \varrho_{k} M_{k} + P_k^t E^t$ in \eqref{eqn:energy_UE},
where $E^t$ denotes the minimum transmission delay for the specific LoRA module of UE, i.e.,
\begin{equation}\label{eqn:trans_time}
E^t = \min \max_{n,k} \frac{G_{n,k}^{\Delta \mathbf{w}}}{W \delta_{n,k}^t \log_2(1+ \mathbf{SINR} )}.
\end{equation}

Thus, $Q_k^t$ in \eqref{eqn:X} can be reformulated as \eqref{eqn:main_obj}.
\begin{figure*}[ht]
\begin{equation}\label{eqn:main_obj} \small \vspace*{-3mm}
\hat{Q}_k^t = \frac{16\xi \zeta_2 }{\epsilon} \sum_{n=1}^{N} B_{n,k}^{t-1} (1-\lambda_{n,k})(1- \beta_{n,k}^t) + \frac{4\zeta_1}{\epsilon}\sum_{n=1}^{N} (1-\lambda_{n,k})(1- \beta_{n,k}^t) + \mu \sum_{n=1}^N \beta_{n,k}^{t} P_k^t E^t
\end{equation}\normalsize\vspace*{-4mm}
\end{figure*}
$\lambda_{n,k}$ takes the following form by conditioning on Rayleigh distribution as
\begin{equation}\label{eqn:succ_prob}
\lambda_{n,k} = \mathrm{exp}[\frac{-W \delta_{n,k}^t N_0 \theta d_i^2}{P_{k}^t (2 \phi)^{\frac{\alpha}{2}-1}} - \phi \pi d_i^2 \theta^{\frac{2}{\alpha}} \int_0^{+\infty}\frac{1-e^{-\frac{12}{5\pi}\theta^{\frac{2}{\alpha}}x}}{1+x^{\frac{\alpha}{2}}}\mathrm{d}x ],
\end{equation}
where $\phi$ denotes the spatial density of gNBs following a homogeneous Poisson point process \cite{rayleigh}.

\subsubsection{Bandwidth allocation}
To balance the tradeoff between transmission reliability and energy consumption, we first optimize $E^t$ with regard to bandwidth resource allocation, and then an adaptive transmit power control scheme is proposed to minimize $ \hat{Q}_k^t$ based on $E^t$. Thus, the subproblem of bandwidth allocation can be reformulated as
\begin{equation}\label{eqn:opt3}
 \min_{\bm{\delta}^t} \sum_{n=1}^N \beta_{n,k}^{t} P_k^t E^t, \,\,s.t. \,\, \eqref{eq:sub7}\,\, \mathrm{and} \, \,\eqref{eqn:trans_time}.
\end{equation}


\begin{theorem}\label{theorem2}
Denote the optimal $E^t$ of $\eqref{eqn:opt3}$ as $E^{t*}= \frac{G_{n,k}^{\Delta \mathbf{w}}}{W \delta_{n,k}^t \log_2(1+ \mathbf{SINR} )}$, $\forall k$. The optimal bandwidth allocation solution to \eqref{eqn:opt3} satisfies
\begin{equation}\label{eqn:theorem2}
\sum_{k=1}^K \sum_{n=1}^N \beta_{n,k}^{t} \delta_{n,k}^{t*}=1.
\end{equation}
\end{theorem}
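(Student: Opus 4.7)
The plan is to prove Theorem 2 by contradiction, leveraging the strict monotonicity of each per-user transmission time with respect to its own bandwidth share. Since $P_k^t$ and $\bm{\beta}^t$ are held fixed at this stage and $E^t$ is the same scalar for every term in the objective of \eqref{eqn:opt3}, minimizing $\sum_{n}\beta_{n,k}^t P_k^t E^t$ over $\bm{\delta}^t$ is equivalent to minimizing the min-max quantity $E^t$ itself subject to \eqref{eq:sub7} and the definition in \eqref{eqn:trans_time}. So the whole statement reduces to showing that the tightened budget constraint $\sum_{n,k}\beta_{n,k}^t\delta_{n,k}^{t*}=1$ must hold at any minimizer of $E^t$.

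First I would deal with inactive pairs: if $\beta_{n,k}^t=0$ then the associated transmission time does not enter the max in \eqref{eqn:trans_time}, so assigning $\delta_{n,k}^{t*}>0$ to such a pair only wastes spectrum that could be given to an active pair, which can only reduce the max; hence without loss of generality $\delta_{n,k}^{t*}=0$ whenever $\beta_{n,k}^t=0$. Next, for each active pair I would establish that the per-user rate $r_{n,k}(\delta)=W\delta\log_2\bigl(1+\mathrm{SINR}(\delta)\bigr)$ is strictly increasing in $\delta$, even though the thermal-noise term $\sigma^2=W\delta N_0$ in \eqref{eqn:suc_recovery} also scales with $\delta$; this follows from the standard AWGN-bandwidth argument $\tfrac{d}{dx}\!\bigl[\log_2(1+x)-\tfrac{x}{(1+x)\ln 2}\bigr]>0$ for $x>0$, applied with $x$ set to the SINR expression at fixed signal and interference powers. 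Consequently the transmission time $\tfrac{G_{n,k}^{\Delta\mathbf{w}}}{r_{n,k}(\delta_{n,k}^t)}$ is strictly decreasing in $\delta_{n,k}^t$ for each active $(n,k)$.

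Now suppose, for contradiction, that an optimal $\bm{\delta}^{t*}$ satisfies $\sum_{n,k}\beta_{n,k}^t\delta_{n,k}^{t*}=1-\Delta$ for some slack $\Delta>0$. Pick any active pair $(n^\star,k^\star)$ attaining the maximum in \eqref{eqn:trans_time} and define a feasible perturbation by setting $\tilde{\delta}_{n^\star,k^\star}^t=\delta_{n^\star,k^\star}^{t*}+\Delta$ and leaving all other coordinates unchanged; feasibility with respect to \eqref{eq:sub7} is immediate. By the monotonicity established above, the transmission time of $(n^\star,k^\star)$ strictly decreases, while the other transmission times are unaffected, so the new maximum is strictly smaller than $E^{t*}$. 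This contradicts the optimality of $\bm{\delta}^{t*}$ and forces $\Delta=0$, yielding \eqref{eqn:theorem2}.

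The main obstacle I expect is the rigorous verification of strict monotonicity of $r_{n,k}(\delta)$ in the presence of out-of-cell interference $\tilde{\Xi}_{o,k}$ and a $\delta$-dependent noise term, since one must disentangle the competing effects of bandwidth on SINR and on the pre-log factor; once that single scalar calculation is in hand, the contradiction argument and the handling of inactive pairs are routine.
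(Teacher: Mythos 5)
Your proposal is correct and follows essentially the same route as the paper: the core step in both is showing that $W\delta\log_2(1+\mathbf{SINR})$ is strictly increasing in $\delta$ via the inequality $\ln(1+x) > \frac{x}{1+x}$ for $x>0$, despite the $\delta$-dependent noise term, so that any unallocated bandwidth could strictly reduce $E^t$. Your explicit perturbation argument for the bottleneck pair and the handling of inactive $(n,k)$ pairs merely spell out the final step that the paper delegates to a citation.
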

\begin{proof}
Denote by $\mathbf{S} = |\mathbf{h}_k^t|^2 P_k^t d_k^{-\alpha}$, and $\mathbf{I} = \sum_{U_{l} \in \tilde{\Xi}_{o,k}} |\mathbf{h}_l^t|^2 P_l^t d_l^{-\alpha}$, the inequality exists
\begin{equation}\label{eqn:proof_1}\small
\begin{aligned}
 & \frac{\mathrm{d}}{\mathrm{d}\delta}(W \delta \log_2(1+ \frac{\mathbf{S}}{\mathbf{I} + W N_0 \delta} )) \\
 = & \frac{W}{\mathrm{ln}2}[ \mathrm{ln}( 1 + \frac{\mathbf{S}}{\mathbf{I} + W N_0 \delta})
 - \frac{\mathbf{S} W N_0 \delta}{ (\mathbf{S}+ \mathbf{I} + W N_0 \delta) (\mathbf{I} + W N_0 \delta)} ] \overset{(a)} >0,
\end{aligned}
\end{equation}
where $(a)$ is based on $\mathrm{ln}(1+x) > \frac{x}{1+x}$, $\forall x>0$. For $0\leq \delta \leq 1$, it holds $\frac{\mathrm{d} E}{\mathrm{d}\delta} \leq 0$, and thus $E^{t*}$ monotonically decreases with $\delta_{n,k}^t$. The optimal solution to \eqref{eqn:opt3} is achieved if and only if all bandwidth is allocated \cite{binary}, which corroborates \eqref{eqn:theorem2}.
\end{proof}

Based on Theorem \ref{theorem2}, $\delta_{n,k}^{t*}$ of \eqref{eqn:opt3} can be derived in the form of Lambert-$W$ function $\mathbf{LW}(\cdot)$ as
\begin{equation}\label{eqn:optimal_bandwidth}
\delta_{n,k}^{t*} = \frac{|\mathbf{h}_k^t|^2 P_k^t d_k^{-\alpha} }{W N_0( e^{-\mathbf{LW}(\varphi)} - 1) } - \frac{\sum_{U_{l} \in \tilde{\Xi}_{o,k}} |\mathbf{h}_l^t|^2 P_l^t d_l^{-\alpha}}{W N_0},
\end{equation}
where $\varphi$ can be approximated by the following implicit equation through numerical algorithms such as Newton's method
\begin{equation}\label{eqn:varphi}\small
\varphi = [ \frac{  G_{n,k}^{\Delta \mathbf{w}}N_0 \mathrm{ln}2 }{E^{t*}|\mathbf{h}_k^t|^2 P_k^t d_k^{-\alpha} } - \frac{\sum_{U_{l} \in \tilde{\Xi}_{o,k}} |\mathbf{h}_l^t|^2 P_l^t d_l^{-\alpha}}{|\mathbf{h}_k^t|^2 P_k^t d_k^{-\alpha}} \mathbf{LW}(\varphi) ]( \frac{\varphi}{\mathbf{LW}(\varphi)} - 1).
\end{equation}

As the analytical solution for $\delta_{n,k}^t$ cannot be obtained with Lambert-$W$ function, a two-tier binary search-based algorithm is designed to approach the optimal solution numerically in Algorithm \ref{alg:binary}.
On the first tier, the LoRA module-specific bandwidth $\delta_{n,k}^t$ is derived, and then aggregated based on model switching results $\beta_{n,k}^t$ on the second tier. If the overall bandwidth $\bm{\delta}_\mathrm{A}$ exceeds the threshold, the searching region for $E^t$ is halved by retaining the larger half, and vice versa until reaching the convergence requirement.


\begin{algorithm}[!h]
\caption{A two-tier binary search algorithm for bandwidth allocation}
	\begin{algorithmic}[1]\label{alg:binary}
        \STATE \textbf{Initialization}: $E_\mathrm{up}^t=E_\mathrm{input}$, $E_\mathrm{down}^t = E_{\mathrm{min}}$, $E^t = \frac{1}{2}(E_\mathrm{up}^t + E_\mathrm{down}^t)$, precision coefficient $\omega$, and the maximum number of iterations $j_{\mathrm{max}}$ ;
        \STATE \textbf{Repeat:} For the $j$-th iteration, $1 \leq j \leq j_{\mathrm{max}}$ \\
        \begin{itemize}
            \item For the device $U_k$, $k = 1, \cdots, K$; \\
            $\textit{(Tier~1.~Solve~LoRA~module-specific~bandwidth)}$
            \begin{itemize}
            \item For the $n$-th LoRA module;
                \begin{itemize}
                    \item Derive $\varphi$ by solving \eqref{eqn:varphi} based on $P_k^t$ and $E^t$.\\
                    \item Derive $\delta_{n,k}^{t}$ by solving \eqref{eqn:optimal_bandwidth} based on $\varphi$.\\
                \end{itemize}
            \end{itemize}
            $\textit{(Tier~2.~Solve~UE-specific~bandwidth)}$
            \item Update $\bm{\delta}_\mathrm{A} =\sum_{k=1}^K \sum_{n =1}^N \beta_{n,k}^t \delta_{n,k}^t$.
            \begin{itemize}
            \item If $\bm{\delta}_\mathrm{A} > 1$, $E_\mathrm{down}^t = E^t$, $E^t=\frac{1}{2}(E^t + E_\mathrm{up}^t)$.
            \item If $\bm{\delta}_\mathrm{A} <1-\omega$, $E_\mathrm{up}^t = E^t$, $E^t=\frac{1}{2}(E^t + E_\mathrm{down}^t)$.
            \end{itemize}
        \end{itemize}
        \STATE \textbf{Termination:} When $1-\omega \leq \bm{\delta}_\mathrm{A} \leq 1$ or $j > j_{\mathrm{max}}$. \\
        \STATE \textbf{Return:} $\delta_{n,k}^t$, $\forall n$, $\forall k$, and $E^t$.
	\end{algorithmic}
\end{algorithm}
\begin{algorithm}[htb]
\caption{A joint online optimization algorithm of \eqref{eqn:opt}}
	\begin{algorithmic}[1]\label{alg:alg2}
        \STATE \textbf{Initialization}: $\beta_{n,k}^{1} = 1$, $\delta^{1}_{n,k}$, $E^{1}$, $P^{1}_{k}$, $\forall n$, $\forall k$, $\bm{\lambda}^0 = \mathbf{0}$, maximum iteration $e_{\mathrm{max}}$, and convergence threshold $q_{\mathrm{th}}$.
        \STATE For the $t$-th round, $t=1,\cdots,T$, \\
        \quad ~~$\textit{(Step~1.~Wireless~resource~management)}$
        \begin{itemize}
            \item \textbf{Repeat:} For the $e$-th iteration, $1 \leq e \leq e_{\mathrm{max}}$;
            \begin{itemize}
            \item Initialize $\delta^{t,1}_{n,k} = \delta^{t}_{n,k}$, $E^{t,1} = E^{t}$, and $P^{t,1}_{k}=P^{t}_{k}$.
            \item Update $\delta_{n,k}^{t,e+1}$ and $E^{t,e+1}$ by invoking Algorithm \ref{alg:binary} based on $\beta_{n,k}^{t}$ and $E^{t,e}$.
            \item Update $P_k^{t,e+1}$ by solving \eqref{eqn:transmit_power} based on $\beta_{n,k}^{t}$, $\delta_{n,k}^{t,e+1}$ and $E^{t,e+1}$.
            \item Calculate the value of $\hat{Q}_k^{t,e+1}$ in \eqref{eqn:main_obj}      
            \end{itemize}
        \item \textbf{Termination:} When $| \hat{Q}_k^{t,e+1} - \hat{Q}_k^{t,e}| \leq q_{\mathrm{th}}$, $\forall k$, or $e > e_{\mathrm{max}}$.\\
        $\textit{(Step~2.~Model~switching~optimization)}$
        \item Update $\beta_{n,k}^{t+1}$ and $\bm{\lambda}^{t+1}$ by solving \eqref{eqn:opt_4} based on $\delta_{n,k}^{t+1}$ and $P^{t+1}_{k}$, and derive the integer solutions.
        \end{itemize}
	\end{algorithmic}
\end{algorithm}

\subsubsection{Transmit power control}
The optimization subproblem for $P_k^t$ can be expressed as
\begin{equation}\label{eqn:transmit_power}
\min_{\mathbf{P}^t} \hat{Q}_k^t(P_{k}^t)\,\, \mathrm{in} \,\,\eqref{eqn:main_obj},\,\, s.t.~\eqref{eq:sub1}.
\end{equation}

The transmit power $P_k^t$ is first relaxed as a continuous real number, and thus \eqref{eqn:transmit_power} can be converted into a single-variate optimization problem. The first-order derivative of $\hat{Q}_k^t(P_k^t)$ with respect to $P_k^t$ is obtained as \eqref{eqn:derivate}.
\begin{figure*}[ht]
\begin{equation}\label{eqn:derivate}\small \vspace*{-3mm}
\nabla_P \hat{Q}_k^t(P_k^t) = -  \sum_{n=1}^N (1-\beta_{n,k}^t)( \frac{16\xi \zeta_2}{\epsilon} B_{n,k}^{t-1} +\frac{4\zeta_1}{\epsilon}) \frac{W \delta_{n,k}^t N_0 \theta d_i^2}{(P_k^t)^2 (2 \phi)^{\frac{\alpha}{2}-1}} e^{-\frac{W \delta_{n,k}^t N_0 \theta d_i^2}{P_k^t (2 \phi)^{\frac{\alpha}{2}-1}}} + \mu \sum_{n=1}^N \beta_{n,k}^t E^t
\end{equation}\normalsize \vspace*{-3mm}
\end{figure*}
Since directly solving $\nabla_P \hat{Q}_k^t(P_k^t)=0$ has no tractable closed-form solutions, the stationary point cannot be acquired. Therefore, the numerical algorithms such as interior point method are employed to improve computational efficiency.


In Algorithm \ref{alg:alg2}, we design an online optimization algorithm of \eqref{eqn:opt} by jointly solving the aforementioned three subproblems. In the $t$-th round, the variables of transmit power and bandwidth allocation are updated iteratively until the objective $Q  _k^t$ converged, and then the model switching solutions are updated to minimize the cumulative cost. 





\addtolength{\topmargin}{0.05in}





\section{Simulation Results}
In this section, our proposed scheme is evaluated in a federated fine-tuning system to implement language sentiment analysis and  language inference tasks based on SST-2 and QNLI data set, respectively. We configure $N=4$ LoRA modules based on RoBERTa model \cite{FlexLora}. The system parameters are set as $f_k = 1.5$ GHz, $\alpha= 3.8$, $s_t = 1$, $\tau = 4$, $v_n = 0.1$, $W=1$ GHz, $\varrho_k=1 \times 10^{-27}$, $M_k=737.5$ cycles/bit, $N_0= -162$ dBm/Hz, $P_{\mathrm{max}} = 200$ mW, and $E_{\mathrm{min}} = 5$ ms. To characterize the features of heterogenous settings, the data of 100 UEs is drawn from a Dirichlet distribution \cite{HierFlexLora}.



\begin{figure*}[htb]
   \begin{minipage}{0.26\textwidth}
     \centering
     \includegraphics[width=1.0\linewidth]{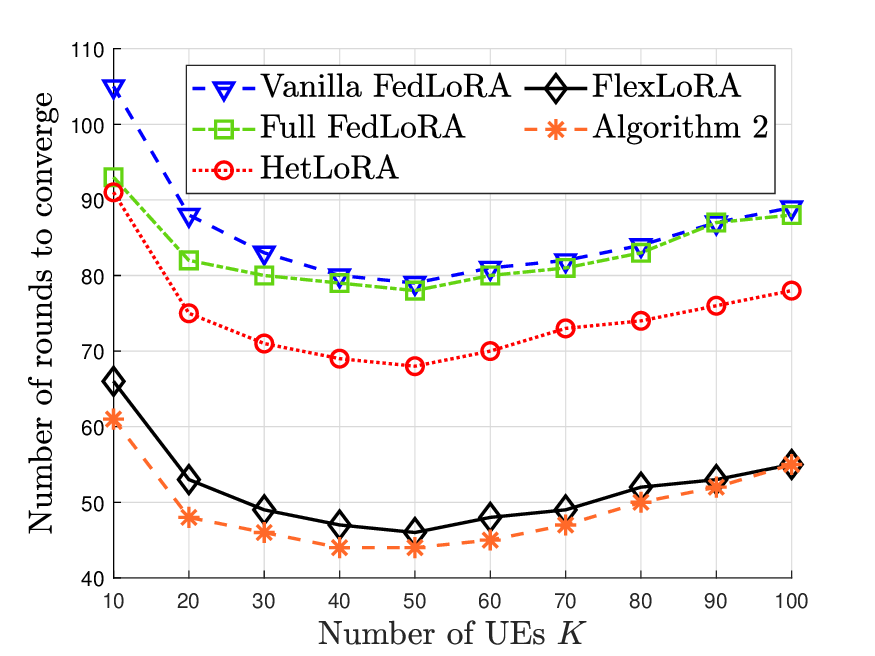}
     \vspace*{-6mm}\caption{Convergence performance\\ of Algorithm \ref{alg:alg2} (SST-2 data set).}\label{fig:figure1}
   \end{minipage}\hspace*{-1.0em}
   \begin{minipage}{0.26\textwidth}
     \centering
     \includegraphics[width=1.0\linewidth]{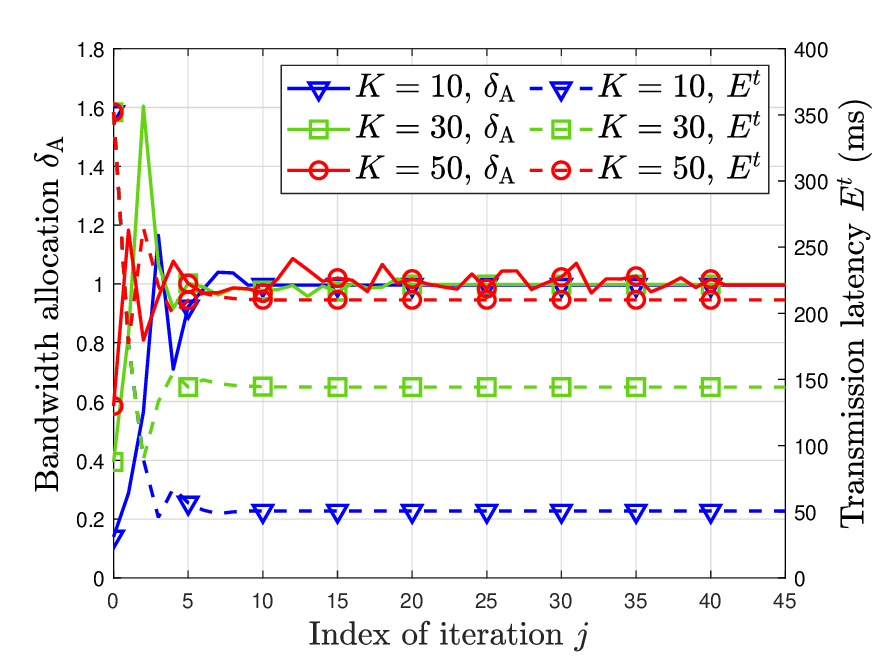}
     \vspace*{-6mm}\caption{Stability performance of \\Algorithm \ref{alg:binary} (SST-2 data set).}\label{fig:figure2}
   \end{minipage}
   \begin{minipage}{0.26\textwidth}
     \centering
     \includegraphics[width=1.0\linewidth]{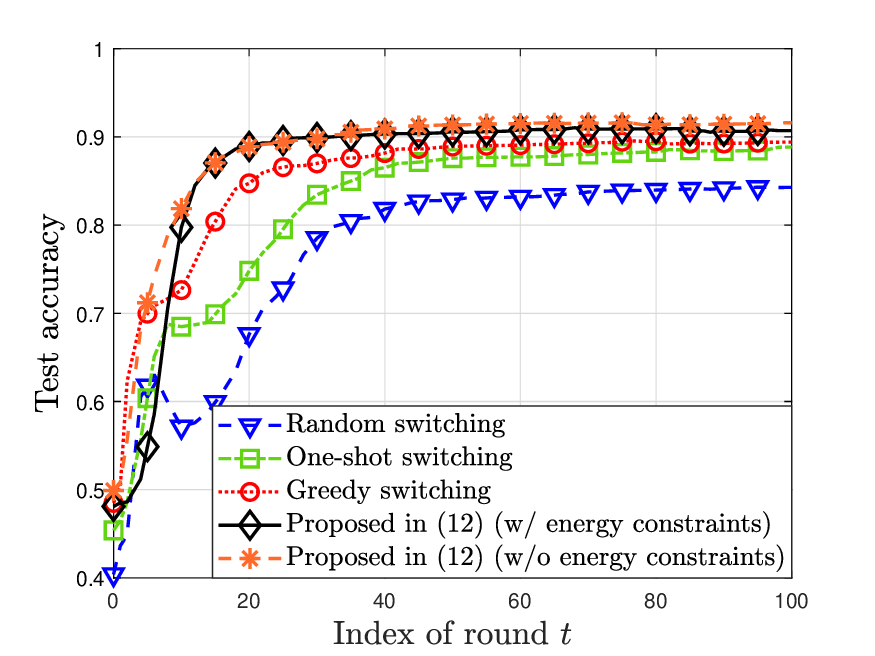}
     \vspace*{-6mm}\caption{Test accuracy performance\\ with model switching schemes \\(QNLI data set).}\label{fig:figure3}
   \end{minipage}\hspace*{-1.0em}
   \begin{minipage}{0.26\textwidth}
     \centering
     \includegraphics[width=1.0\linewidth]{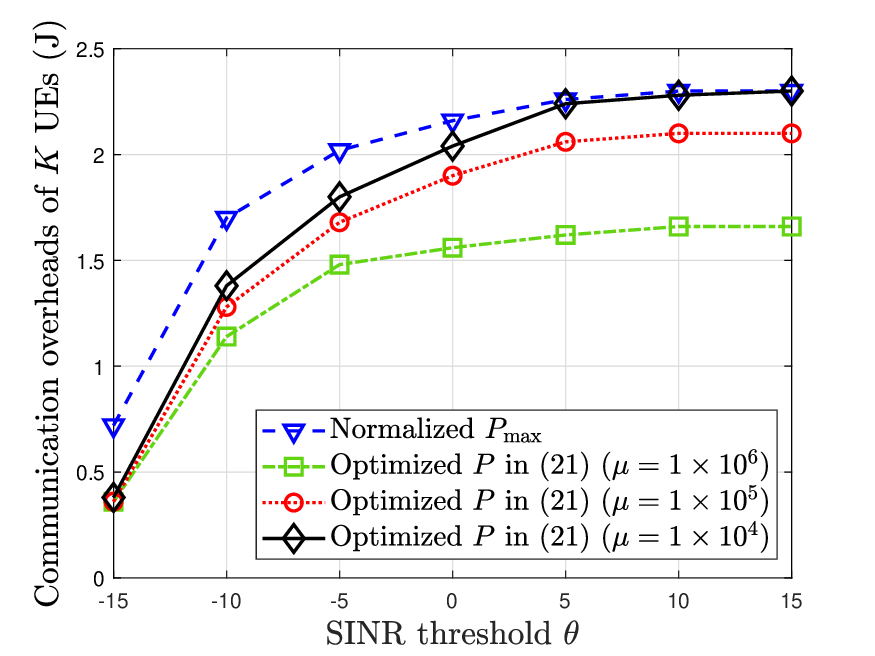}
     \vspace*{-6mm}\caption{Energy efficiency perform-\\ance of Algorithm \ref{alg:alg2} with coefficient\\ $\mu$ (QNLI data set).}\label{fig:figure4}
   \end{minipage}\hspace*{-1.0em}  \vspace*{-4mm}
\end{figure*}

In Fig. \ref{fig:figure1}, the convergence performance versus the number of UEs $K$ is evaluated with four benchmarks, including vanilla FedLoRA \cite{few-shot}, full-parameter FedLoRA, HetLoRA \cite{hetlora} and FlexLoRA \cite{FlexLora}. The number of rounds to converge first declines in the lower $K$ region and then increases, which showcases the tradeoff between scheduling more UEs and allocating more bandwidth resources to enhance transmission reliability. By employing multiple LoRA modules for adapting to the non-IID data characteristics, Algorithm \ref{alg:alg2} can speed up convergence by $42.7\%$ on average, compared with FedLoRA which fine-tunes a common LoRA module. In addition, our proposed scheme outperforms HetLoRA and FlexLoRA by providing each UE a tailored LoRA module for federated fine-tuning with device heterogeneity, and the LoRA module configuration is dynamic via model switching to overcome the impact of time-varying wireless circumstances. Thus, the performance gains can be further enlarged by $34.1\%$ and $5.0\%$, respectively.


In Fig. \ref{fig:figure2}, the results show that Algorithm \ref{alg:binary} can always attain a stable bandwidth allocation solution. As $K$ rises, the iteration to converge also increases, which manifests the time complexity of Algorithm \ref{alg:binary} follows the order of $\mathcal{O}(K\mathrm{log}_2(E_\mathrm{input}/\omega))$. Moreover, the transmission latency is adaptive to $K$, which corroborates the scalability of Algorithm \ref{alg:binary} in diverse wireless circumstances to jointly enhance bandwidth usage while reducing communication overheads.

Fig. \ref{fig:figure3} demonstrates the test accuracy on QNLI data set with model switching schemes, in which the greedy scheme optimizes short-term slice of \eqref{eq:sub9a} in each round, and the one-shot scheme implements model switching once and follows vanilla FedLoRA thereafter. By minimizing the long-term generalization risk gap with device heterogeneity, the convergence performance of our proposed scheme is guaranteed, and the test accuracy improves by $1.9\%$ and $1.3\%$ compared with one-shot and greedy schemes, respectively. Moreover, the tradeoff between generalization capability and energy efficiency is achieved via energy consumption-adaptive switching (w/ energy constraints) with a $0.9\%$ reduction in test accuracy.



In Fig. \ref{fig:figure4}, the communication overheads of Algorithm \ref{alg:alg2} are evaluated with respect to the transmit power control scheme and energy efficiency coefficient $\mu$. Unlike the maximum transmit power scheme with $P_k^t =P_{\mathrm{max}}$, the transmit power in \eqref{eqn:transmit_power} can be optimized adaptively based on wireless channel condition $\theta$. Therefore, the power consumption can be saved by $16.8\%$ when $\theta=-5$ dB and $\mu = 1 \times 10^5$, and the performance gap is mitigated when the channel quality deteriorates. As the impact of energy consumption increases in the larger $\mu$ region, the communication overheads have been reduced by $11.9\%$ when $\mu = 1 \times 10^6$, and the energy efficiency requirement can be guaranteed.

\section{Conclusion}
In this paper, a federated fine-tuning paradigm via online optimization over wireless networks is proposed, in which the UEs employ a model-switching scheme and subscribe LoRA modules dynamically to implement federated fine-tuning. The theoretical analysis has proven an tractable upper bound on the generalization performance, and an online algorithm to mitigate the impact of device heterogeneity is designed. Finally, the simulation results verify the effectiveness of the proposed scheme in enhancing learning performance.

\appendices
\section{Proof of Theorem 1}
Denote the global risk function as $F(\mathbf{w}) = \sum_{k=1}^K \rho_k^{t} F_k(\mathbf{w})$. Recalling the definition in \eqref{eqn:inference}, $\Phi_t$ is first rewritten as
\begin{equation}\label{eqn:proof_1}\small
\begin{aligned}
\Phi_t & = \frac{1}{N} \sum_{k=1}^K \rho_k^t \sum_{n=1}^N  \mathbb{E} [F_k(\Delta \mathbf{w}_{n}^{t+1}+\mathbf{w}_0)-F_k(\mathbf{w}^{*})] \\
& + \frac{1}{N} \sum_{k=1}^K \rho_k^t \sum_{n=1}^N \mathbb{E} [F_k(\mathbf{w}^{*})- F_k(\mathbf{w}_{k,t}^{*})],
\end{aligned}
\end{equation}
where $\mathbf{w}^{*} = \arg\min_{\mathbf{w}} F(\mathbf{w})$. The second-order Taylor expansion of $F_k(\Delta \mathbf{w}_{n}^{t+1}+\mathbf{w}_n)$ can be derived as
\begin{equation}\label{eqn:proof_1_1}\small
 F_k(\Delta \mathbf{w}_{n}^{t+1}+\mathbf{w}_0)
\leq F_k(\Delta \mathbf{w}_{n}^{t+1})  + \mathbf{w}_n^\top \nabla F_k(\Delta \mathbf{w}_{n}^{t+1}) + \frac{\epsilon}{2} \| \mathbf{w}_n \|^2,
\end{equation}
where the inequality follows Assumption \ref{assumpt2}.
Denote by $\mathbf{e}_n^t = \nabla F(\Delta \mathbf{w}_{n}^{t}) - \frac{\sum_{k=1}^K D_k^t \beta_{n,k}^t \gamma_{n,k}^t \nabla  F_{k}(\Delta \mathbf{w}_{n}^t)}{\sum_{k=1}^K D_k^t \beta_{n,k}^t \gamma_{n,k}^t}$, which satisfies $\Delta\mathbf{w}_{n}^{t+1} = \Delta \mathbf{w}_{n}^{t} - \eta (\nabla F(\Delta \mathbf{w}_{n}^{t}) - \mathbf{e}_n^t)$ in \eqref{eqn:global_model}.

To study the relationship between $F_k(\Delta \mathbf{w}_{n}^{t+1})$ and $F_k(\Delta \mathbf{w}_{n}^{t})$, $F_k(\Delta \mathbf{w}_{n}^{t+1})$ can be second-order expanded as
\begin{equation}\label{eqn:proof_2}\small
\begin{aligned}
&F_k(\Delta \mathbf{w}_{n}^{t+1}) \leq F_k(\Delta \mathbf{w}_{n}^{t}) - \eta (\nabla F(\Delta \mathbf{w}_{n}^{t}) - \mathbf{e}_n^t)^\top \nabla F_k(\Delta \mathbf{w}_{n}^{t}) \\
& + \frac{\eta^2 \epsilon}{2} \| \nabla F(\Delta \mathbf{w}_{n}^{t})\|^2 - \eta^2 \epsilon (\mathbf{e}_n^t)^\top \nabla F(\Delta \mathbf{w}_{n}^{t}) + \frac{\eta^2 \epsilon}{2} \| \mathbf{e}_n^t \|^2.
\end{aligned}
\end{equation}

\addtolength{\topmargin}{0.05in}
By taking the summation of \eqref{eqn:proof_2} over $K$ UEs weighted by $\rho_k^t$, and subtracting $\sum_{k=1}^K \rho_k^t F_k(\mathbf{w}^{*})$ in both sides with step size $\eta = \frac{1}{\epsilon}$, it can be derived that
\begin{equation}\label{eqn:proof_6}\small
\begin{aligned}
& \sum_{k=1}^K \rho_k^t [(F_k(\Delta \mathbf{w}_{n}^{t+1}) - F_k(\mathbf{w}^{*}))- (F_k(\Delta \mathbf{w}_{n}^{t}) - F_k(\mathbf{w}^{*}))]  \\
& \leq - \frac{1}{\epsilon} \nabla F(\Delta \mathbf{w}_{n}^{t})^\top \sum_{k=1}^K \rho_k^t \nabla F_k(\Delta \mathbf{w}_{n}^{t}) + \frac{1}{2 \epsilon} \| \nabla F (\Delta \mathbf{w}_{n}^{t})\|^2 \\
& + \frac{1}{\epsilon} \sum_{k=1}^K \rho_k^t (\mathbf{e}_n^t)^\top ( \nabla F_k(\Delta \mathbf{w}_{n}^{t}) - \nabla F (\Delta \mathbf{w}_{n}^{t})) + \frac{\| \mathbf{e}_n^t \|^2}{2 \epsilon},
\end{aligned}
\end{equation}
where $\| \mathbf{e}_n^t \|^2$ can be bounded based on Assumption \ref{assumpt3} as \cite{assum}
\begin{equation}\label{eqn:proof_8}\small
\mathbb{E}\| \mathbf{e}_n^t\|^2 \leq 4 \sum_{k=1}^K \rho_k^t (1-\lambda_{n,k})(1- \beta_{n,k}^t) (\zeta_{1} + \zeta_{2}\| \nabla F(\Delta \mathbf{w}_n^t)\|^2).
\end{equation}

The expectation of $(\mathbf{e}_n^t)^\top ( \nabla F_k(\Delta \mathbf{w}_{n}^{t}) - \nabla F(\Delta \mathbf{w}_{n}^{t}))$ can be extended by invoking $ \mathbb{E}(\| \mathbf{e}_n^t\|^2)$ in \eqref{eqn:proof_8} as
\begin{equation}\label{eqn:proof_9}\small
\begin{aligned}
& \mathbb{E}[\sum_{k=1}^K \rho_k^t (\mathbf{e}_n^t)^\top ( \nabla F_k(\Delta \mathbf{w}_{n}^{t}) - \nabla \bar{F}(\Delta \mathbf{w}_{n}^{t}))] \\
& \overset{(a)} \leq \frac{1 }{2}\sum_{k=1}^K \rho_k^t  \{ \mathbb{E}\| \mathbf{e}_n^t\|^2 \\
& + \mathbb{E}[\frac{\sum_{j=1}^K (\mathcal{D}_{j, \mathrm{te}}^{t})^2 \sum_{j=1}^K \|\nabla F_k(\Delta \mathbf{w}_{n}^{t}) - \nabla F_j(\Delta \mathbf{w}_{n}^{t})\|^2}{(\sum_{j=1}^K \mathcal{D}_{j, \mathrm{te}}^{t})^2}] \},
\end{aligned}
\end{equation}
where $(a)$ stems from Cauchy-Schwarz inequality. Based on \eqref{eqn:global_model}, we expand quadratic terms and combine like terms as
\begin{equation}\label{eqn:proof_11}\small
\begin{aligned}
& \|\nabla F_k(\Delta \mathbf{w}_{n}^{t}) - \nabla F_j(\Delta \mathbf{w}_{n}^{t})\|^2 \\
& \leq \frac{ [\sum_{d=1}^{D_{k, \mathrm{te}}^t} \nabla l(\mathbf{w}_{n}^{t};\mathbf{x}^d)]^2}{(D_{k, \mathrm{te}}^{t})^2 } + \frac{ [\sum_{d=1}^{D_{j, \mathrm{te}}^t} \nabla l(\mathbf{w}_{n}^{t};\mathbf{x}^d)]^2}{(D_{j, \mathrm{te}}^{t})^2 } \\
& + \frac{2 D_{j, \mathrm{te}}^{t} D_{k, \mathrm{te}}^{t}\sum_{d=1}^{D_{k, \mathrm{te}}^t} \nabla l(\mathbf{w}_{n}^{t};\mathbf{x}^d) \sum_{d=1}^{D_{j, \mathrm{te}}^t} \nabla l(\mathbf{w}_{n}^{t};\mathbf{x}^d)}{(D_{k, \mathrm{te}}^{t})^2 (D_{j, \mathrm{te}}^{t})^2}\\
& \overset{(b)} \leq 4 (\zeta_{1} + \zeta_{2}\| \nabla F(\Delta \mathbf{w}_n^t)\|^2),
\end{aligned}
\end{equation}
where $(b)$ follows Assumption \ref{assumpt3}. As $\mathcal{D}_{j, \mathrm{te}}^{t} \geq 1$, $\forall j$, we have $\sum_{j=1}^K (\mathcal{D}_{j, \mathrm{te}}^{t})^2 \leq (\sum_{j=1}^K \mathcal{D}_{j, \mathrm{te}}^{t})^2$.
By invoking Assumption \ref{assumpt2}, the Polyak-Łojasiewicz inequality holds for $F(\mathbf{w})$ as \cite{assum}
\begin{equation}\label{eqn:proof_7}\small
\| \nabla F(\Delta \mathbf{w}_{n}^{t})\|^2 \geq 2 \xi (F(\Delta \mathbf{w}_{n}^{t}) - F(\mathbf{w}^{*})).
\end{equation}
\addtolength{\topmargin}{-0.02in}
Recalling the second term in \eqref{eqn:proof_1}, as $F_k(\mathbf{w}) $ is $\epsilon $-Lipschitz continuous in Assumption \ref{assumpt1}, it holds that
\begin{equation}\label{eqn:proof_12}\small
\mathbb{E}[F_k(\mathbf{w}^{*}) - F_k(\mathbf{w}_{k,t}^{*})] \leq \epsilon \sqrt{ \mathbb{E}\| \mathbf{w}^{*} - \mathbf{w}_{k,t}^{*} \|^2 }\overset{(c)} \leq \epsilon \sqrt{\frac{4 \zeta_1}{\xi^2 D_{k, \mathrm{te}}^{t}}},
\end{equation}
where $(c)$ follows Lemma 1 in \cite{lemma}. 
By substituting \eqref{eqn:proof_8}, \eqref{eqn:proof_11}, \eqref{eqn:proof_7} and \eqref{eqn:proof_12} into \eqref{eqn:proof_6}, and applying it recursively for $t$ rounds, \eqref{eqn:theorem} can be derived. The proof has finished.\vspace*{-2mm}


\end{document}